\newtheorem{theorem}{Theorem}[section]
\newtheorem{definition}[theorem]{Definition}
\newtheorem{proposition}[theorem]{Proposition}
\newtheorem{remark}[theorem]{Remark}
\newenvironment{proof} {\textbf{Proof.}} {\ }
\begin{document}

\noindent{{\Large{\textbf{Algebraic and relational models for a system \\ 
based on a poset of two elements}}} 
%footnote
\footnote{Research realized in the framework of COST Action n$^\circ$ 15 (in Informatics) ``Many-Valued Logics for Computer Science Applications''}
}

\

\noindent \small{LUISA ITURRIOZ,
\textit{Universit\'e de Lyon, Universit\'e Claude Bernard Lyon 1, \\
Institut Camille Jordan, CNRS UMR 5208,
F-69622 Villeurbanne cedex, France.} \\
\noindent E-mail: luisa.iturrioz@math.univ-lyon1.fr}

\

\noindent \textbf{Abstract:}
The aim of this paper is to present a very simple set of conditions, necessary 
for the management of knowledge of a poset $T$ of two agents, which are 
partially ordered by the capabilities available in the system. We build up a formal system and we elaborate suitable semantic models in order to derive information from the poset.
The system is related to three-valued Heyting algebras with Boolean operators.

\

\noindent \textbf{Key Words:}
Distributive lattices with Boolean operators, $T$-structures, three-valued Heyting algebras, algebraic and relational models, knowledge representation

%\section{Introduction}
\section{\hspace{-2.6ex}.{\hspace{1.ex}}Introduction}

\noindent The purpose of this paper is to provide a propositional logical framework for representing and reasoning about knowledge of a poset of two agents (e.g. in robotics). The situation we have in mind may be described as follows. 
 
Assume $T$ is a poset of two agents $t_1$ and $t_2$. 
We denote $t_1 \leq t_2$ to express the fact that agent $t_2$ has more 
possibilities than agent $t_1$. In the applications, $T$ may be considered 
to be a poset of two co-operating intelligent agents partially ordered by the competences about a particular domain, as for example a ``knower" and a ``learner".

We suppose that a minimal necessary ingredient of a formal system that 
is capable of simulating a practical reasoning must include a lattice 
structure to manage the connectives ``and" and ``or". 

For agent $t_i$, the intuitive meaning of the connective $S_{t_i} a$ is: ``agent $t_i$ perceives the information $a$". Related to the lattice structure, perception operators are asked to be compositional.

\

Mathematical simple structures that we explore in modelling our ideas may be presented in the following way.

On a distributive lattice $(A, 0, 1, \wedge,\vee)$ with 
zero and unit we are going to define three unary operators, denoted $C, S_{t_1},S_{t_2}$. 
Perception operators $S_{t_1},S_{t_2}$ are asked to be compositional, 
Boolean, and accepting individual opinions without any change; 
$C$ which is considered here only to give a neat definition below, 
is understood to satisfy the equalities:  
$S_{t_1} a \wedge Ca = 0$ and $S_{t_1} a \vee Ca = 1$, for all $a \in A$.

\

Thus, the required properties for these operators are the following, for all $a \in A$: 
\begin{itemize}
\item the operators $S_t$, for $t \in \{ t_1,t_2 \}$, are $(0, 1)$-lattice homomorphisms 
from $A$ onto the sublattice $B(A)$ of all complemented elements of $A$ such that $S_t S_w a = S_w a$, for all $t, w \in \{ t_1,t_2 \}$,
\item $S_{t_1} a \leq S_{t_2} a$,
\item $S_{t_1}$ is related to the operation C by the equations: \\ 
$S_{t_1} a \wedge Ca = 0$ and $S_{t_1} a \vee Ca = 1$.  
\end{itemize}

We remark that, for arbitrary elements $a, b \in A$, the relation $``\equiv"$ defined in the following way: 
\[
a \equiv b \quad \text{if and only if}\quad S_t a = S_t b, \quad \text{for} \ t \in \{ t_1,t_2 \}.
\]

\noindent is an equivalence relation on A. With respect to the connectives 
$\wedge, \vee, C, S_{t_1}, S_{t_2}$ it is an easy calculation to check that 
it is a congruence in $A$.

In view of this fact, we can identify elements in $A$ if and only if 
agents in $T$ have the same insights on them.

\

The paper is organised as follows. In Section 2, the definition of the algebraic structure is derived and a fundamental example is exhibited. Other examples are in \cite{Itu01} and \cite{Itu07}.
As the definition is not suitable for logic considerations, we give an equational definition in Section 3. In Section 4 a formalized propositional language is introduced as well as two adapted semantics. The equivalence of algebraic and relational semantics is shown in Section 5. Finally, in Section 6, the question of the decidability of the system is answered.

%\section{An algebraic structure}
\section{\hspace{-2.6ex}.{\hspace{1.ex}}An algebraic structure and a fundamental example}

All the above constraints suggest to consider a three-valued structure that we have studied in \cite{Itu01} and \cite{Itu07}. 

This structure emerged from a fundamental example presented later and is 
related to ideas of Moisil \cite{Moi40}, \cite{Moi72}, \cite{BFGR}.

\

For notational convenience, we sometimes replace $t_1$ and $t_2$ by their indices (i.e.\ one and two).

%definition 2.1
\begin{definition} \label{def:alg2} 
An abstract algebra $(A, 0, 1,\wedge,\vee,C, S_1,S_2)$
where $0, 1$ are constants, $C,S_1,S_2$ are unary operations and $\wedge,\vee$ are binary operations is said to be a \textbf{Distributive lattice with three unary operators} if  
\begin{itemize}
\item[] (T1) $(A,0,1,\wedge,\vee)$ is a distributive lattice with zero and unit,

and for every $a,b \in A$ and for all $i,j = 1,2$, the following equations hold:
\item[] (T2) $S_i (a \wedge b) =  S_i a \wedge S_i b$ ; $S_i (a \vee b) =  S_i a \vee S_i b,$ 
\item[] (T3) $S_1 a \wedge Ca = 0$ ; $S_1 a \vee Ca = 1,$
\item[] (T4) $S_i (S_j) a = S_j a,$
\item[] (T5) $S_1 0 = 0$ ; $S_1 1 = 1,$
\item[] (T6) If $S_i a = S_i b$, for all $i= 1,2,$ then $a=b,$ (\it Determination Principle)
\item[] (T7) $S_1 a \leq S_2 a.$
\end{itemize}
\end{definition}
%end definition 2.1

We will refer to a \textbf{$\boldsymbol{T}$-structure} $A$, for short (as in \cite{Itu01} and \cite{Itu07}).
We remark that this definition is not equational and this fact makes it awkward for us.

%proposition 2.2
\begin{proposition}
The following properties are true in any $T$-structure:
\begin{itemize}
\item[] (T8) $S_2 0 = 0$ ; $S_2 1 = 1,$
\item[] (T9) $a \leq b$ if and only if $S_i a \leq S_i b,$ for $i=1,2$, 
\item[] (T10) $S_1 a \leq a \leq S_2 a,$
\item[] (T11) $S_i a \wedge CS_i a = 0$ ; $S_i a \vee CS_i a = 1,$ for $i=1,2$.
\end{itemize}
\end{proposition}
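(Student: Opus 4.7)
The plan is to dispose of the four items in the order (T8), (T11), (T9), (T10), each reducing to a brief application of axioms already stated. For (T8), apply (T4) with $i=2$, $j=1$ to the element $0$: this yields $S_2(S_1 0) = S_1 0$, and since (T5) tells us $S_1 0 = 0$, we get $S_2 0 = 0$; the same substitution with $1$ in place of $0$ gives $S_2 1 = 1$. For (T11), the trick is to instantiate (T3) at the element $S_i a$ rather than at $a$: this produces $S_1(S_i a) \wedge C(S_i a) = 0$ and $S_1(S_i a) \vee C(S_i a) = 1$, and (T4) collapses $S_1(S_i a)$ to $S_i a$, yielding the required complementation identities simultaneously for $i=1,2$.

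For (T9), the forward direction is immediate from (T2): $a \leq b$ means $a \wedge b = a$, so applying $S_i$ and using that it preserves $\wedge$ gives $S_i a \wedge S_i b = S_i a$, i.e.\ $S_i a \leq S_i b$. The backward direction is the substantive step. Assuming $S_i a \leq S_i b$ for $i=1,2$, I would establish $a \wedge b = a$ by invoking the Determination Principle (T6): one checks $S_i(a \wedge b) = S_i a \wedge S_i b = S_i a$ using (T2) together with the hypothesis, and (T6) then forces $a \wedge b = a$, which is exactly $a \leq b$.

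Item (T10) then falls out of (T9), (T4), and (T7). To prove $S_1 a \leq a$, by (T9) it suffices to verify $S_i(S_1 a) \leq S_i a$ for $i=1,2$; idempotence (T4) simplifies the left side to $S_1 a$, and the resulting inequality $S_1 a \leq S_i a$ is trivial for $i=1$ and is precisely (T7) for $i=2$. The second inequality $a \leq S_2 a$ is entirely symmetric, showing $S_i a \leq S_i(S_2 a) = S_2 a$ via (T7) for $i=1$ and triviality for $i=2$. The only real obstacle in the whole argument is spotting that the backward half of (T9) is the place where the Determination Principle (T6) must be deployed; once that step is in hand, everything else is a one-line substitution.
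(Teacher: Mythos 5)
Your proof is correct and complete: each of the four items follows by exactly the applications of (T2)--(T7) you describe, and I verified that every substitution is legitimate (in particular, instantiating (T3) at $S_i a$ and collapsing $S_1 S_i a$ via (T4) for (T11), and routing the backward half of (T9) through the Determination Principle (T6)). The paper itself gives no argument here --- it simply defers to the reference \cite{Itu01} --- so your write-up actually supplies a self-contained proof where the paper offers none; the standard derivation in the literature proceeds along the same lines, so there is nothing to flag.
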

%end proposition 2.2
\begin{proof} See \cite{Itu01}.
\end{proof}

%remark 2.3
\begin{remark}
Let $\boldsymbol{B(A)}$ be the Boolean algebra of all complemented elements in $A$ 
and $\boldsymbol{S_i(A)} = \{x \in A: S_i x = x\}$.  

From \cite{Itu01}, \cite{Itu82} it is well known  that for all $i = 1, 2$,
$\boldsymbol{S_i(A) = B(A)}$. Also, if $``\neg"$ denotes the Boolean negation we have $\neg S_i a = CS_i a$.
\end{remark}

%example 
\noindent{\textbf{A fundamental example}}

For the sake of illustration let us consider a very simple example 
depicting the introduced notions.

\medskip

Let  $T = \{t_1, t_2\}$ be an ordered set such that $t_1 \leq t_2$. For each $t \in T$
we denote $F(t)$ the \textbf{increasing subset} of $T$ , i.e.\ 
\[
F(t) = \{w \in T : t \leq w \}.
\]

Let $A$ be the class of the empty set and all increasing sets, i.e.\
\[
A = \{ \emptyset, F(t_2), F(t_1) \}.
\]
 
The class $A$, ordered by inclusion, is an ordered set with three or two elements, and the system $(A, \emptyset, T, \cap, \cup)$, closed under the operations of intersection and union, is a distributive lattice with zero and unit. For each $t \in T$ we define a special operator $S_t$ on $A$ in the following way, for all $X \subseteq A$:
\[
S_t(X) =
 \left\{
 \begin{array}{rl}
T & \text{if} \quad t \in X\\
\emptyset & \text{otherwise.}
\end{array} 
\right.
\]

Finally we define $CX = \neg S_{t_1}(X)$. Thus the system $(A, \emptyset, T, \cap, \cup,C, S_{t_1},S_{t_2})$ is a $T$-structure, called \textbf {basic} $T$-structure and denoted $\boldsymbol{BT}$ or $\boldsymbol{B}$ if it has three or two elements, respectively. Note that $\boldsymbol{B}$ is a subalgebra of $\boldsymbol{BT}$.

For further examples see \cite{Itu01} and \cite{Itu07}.
 
%\section{An equational definition}
\section{\hspace{-2.6ex}.{\hspace{1.ex}}An equational definition}
 
In order to develop a logic system of any kind, it is convenient to remember (see for example \cite{Ras74}, page 167) that ``implication" seems to be the most important connective. This fact suggests what we do here. 

In \cite{Itu01} we have introduced an \textbf{equational} definition of a $T$-structure by means of a particular intuitionistic implication. 
%definition 3.1
\begin{definition} \label{def:alg3} 
A \textbf{Heyting algebra with three unary operators} (or \textbf{$\boldsymbol{HT}$-algebra} for short) is an abstract system $A = (A, 0, 1,\wedge,\vee,\Rightarrow,\neg,S_1,S_2)$
such that $0, 1$ are constants, $\neg,S_1,S_2$ are unary operations and $\wedge,\vee,\Rightarrow$ are binary operations satisfying the following conditions, for all $a, b, c \in A:$  
\begin{itemize}
\item[] (HT1) $(A,0,1,\wedge,\vee,\Rightarrow, \neg)$ is a Heyting algebra,

and for every $a,b \in A$ and for all $i,j = 1,2$ the following equations hold:
\item[] (HT2) $S_i (a \wedge b) =  S_i a \wedge S_i b$ ; $S_i (a \vee b) =  S_i a \vee S_i b,$
\item[] (HT3) $S_2 (a \Rightarrow b) = (S_2 a \Rightarrow S_2 b),$
\item[] (HT4) $S_1 (a \Rightarrow b) = (S_1 a \Rightarrow S_1 b) \wedge (S_2 a \Rightarrow S_2 b),$
\item[] (HT5) $S_i S_j a = S_j a,$
\item[] (HT6) $S_1 a \vee a = a,$
\item[] (HT7) $S_1 a \vee \neg S_1 a = 1,\quad \text{with}\quad \neg a = a\Rightarrow 0.$
\end{itemize}
\end{definition}
%end definition 3.1

The next two theorems state the equivalence between the notion of 
$T$-structure and that of $HT$-algebra and are proved in \cite{Itu01}.

%theorem 3.2
\begin{theorem} \label{theo:impl}
Let $(A, 0, 1,\wedge,\vee, C, S_1,S_2)$ be a $T$-structure and 
$\Rightarrow$ and $\neg$ be two operations defined by means of the 
following equations, for all $a, b \in A$:
\begin{align}
%%%%% 
a \Rightarrow b &= b \vee \bigwedge_{k=1}^{2} (CS_k a \vee S_k b), \\
%%%%% 
\neg a &= a \Rightarrow 0.
%%%%%
\end{align}
Then the algebra $A = (A, 0, 1,\wedge,\vee,\Rightarrow,\neg,S_1,S_2)$ 
is a $HT$-algebra.
\end{theorem}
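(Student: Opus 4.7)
The plan is to first verify the Heyting condition (HT1) and then read off (HT2)--(HT7) from the $T$-structure axioms together with the explicit formulas for $S_i(a \Rightarrow b)$ that emerge along the way. The unifying tool is the Determination Principle (T6) in its inequality form (T9): a relation $x \leq y$ in $A$ holds iff $S_i x \leq S_i y$ for $i = 1, 2$, which reduces every verification to a Boolean computation inside $B(A)$, where, by the Remark, $CS_k a$ is literally the Boolean complement of $S_k a$.

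The first computation is to apply $S_i$ to the definition of $a \Rightarrow b$. Since each $S_i$ is a $(0,1)$-lattice homomorphism (T2) and fixes $B(A)$ pointwise (both $CS_k a$ and $S_k b$ lie in $B(A)$, using (T4)), one gets
\[
S_i(a \Rightarrow b) \;=\; S_i b \,\vee\, \bigl[(CS_1 a \vee S_1 b) \wedge (CS_2 a \vee S_2 b)\bigr].
\]
For $i=2$ I distribute $S_2 b \vee$ over the meet, absorb the redundant $S_1 b$ using $S_1 b \leq S_2 b$, and collapse $CS_1 a \wedge CS_2 a = CS_2 a$ (which follows from (T7)); this yields $S_2(a \Rightarrow b) = CS_2 a \vee S_2 b$. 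The parallel calculation for $i=1$ gives $S_1(a \Rightarrow b) = (CS_1 a \vee S_1 b) \wedge (CS_2 a \vee S_2 b)$. These are already the right-hand sides of (HT3) and (HT4), once one notes that on Boolean elements the Heyting implication coincides with the Boolean one.

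For the adjunction $a \wedge c \leq b \Leftrightarrow c \leq a \Rightarrow b$ I check both directions after applying $S_i$. Forward: $S_i a \wedge (CS_i a \vee S_i b) = S_i a \wedge S_i b \leq S_i b$, with the extra $CS_2 a \vee S_2 b$ factor in the $i=1$ case absorbed after noting that $S_1 a \wedge CS_2 a \leq S_2 a \wedge CS_2 a = 0$. Conversely, from $a \wedge c \leq b$ I extract $S_i c \leq CS_i a \vee S_i b$ in $B(A)$; for $i=2$ this is $S_2 c \leq S_2(a \Rightarrow b)$, and for $i=1$ I pair the $i=1$ bound with the lift $S_1 c \leq S_2 c \leq CS_2 a \vee S_2 b$ (using (T7)) to land inside $S_1(a \Rightarrow b)$. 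Then (T9) gives $c \leq a \Rightarrow b$.

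The remaining axioms are short. (HT2) is (T2), (HT5) is (T4), and (HT6) follows from $S_1 a \leq a$ in (T10). For (HT7), applying the $\Rightarrow$-formula with $a := S_1 a$, $b := 0$, together with $S_k S_1 a = S_1 a$, gives $\neg S_1 a = CS_1 a$, after which (T3) yields $S_1 a \vee \neg S_1 a = 1$. The main obstacle I expect is the $S_1(a \Rightarrow b)$ bookkeeping, where both factors of the meet genuinely contribute: one must pair the $i=1$ Boolean bound on $S_1 c$ with its $i=2$ lift via (T7), and correctly wield $S_1 b \leq S_2 b$ at the right moment in the distributive simplifications; everything else is routine Boolean manipulation in $B(A)$.
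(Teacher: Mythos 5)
Your proof is correct. Note first that the paper itself gives no argument for this theorem: it simply states that the equivalence of $T$-structures and $HT$-algebras ``are proved in \cite{Itu01}'', so there is no in-paper proof to match, and your self-contained verification is a genuine addition. Your strategy --- reduce every inequality to $B(A)$ via (T9)/(T6), where $CS_k a$ is the Boolean complement of $S_k a$, compute $S_1(a\Rightarrow b)=(CS_1a\vee S_1b)\wedge(CS_2a\vee S_2b)$ and $S_2(a\Rightarrow b)=CS_2a\vee S_2b$ explicitly, and check the residuation $a\wedge c\le b \Leftrightarrow c\le a\Rightarrow b$ componentwise --- is exactly the natural route, and all the individual steps check out: the collapse $CS_1a\wedge CS_2a=CS_2a$ is the complemented form of (T7), the $i=1$ converse direction correctly needs both the $i=1$ Boolean bound and the lift $S_1c\le S_2c\le CS_2a\vee S_2b$, and (HT3)/(HT4) follow because $S_ja\Rightarrow S_jb$ evaluates to $CS_ja\vee S_jb$ under the defining formula together with (T4) and (T5)/(T8). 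Two small points of bookkeeping you should make explicit if you write this up: the claim that $S_i$ fixes $CS_ka$ rests on (T11) plus the Remark that $S_i(A)=B(A)$ (not on (T4) alone, which only covers $S_kb$); and (T9), (T10), (T11) and that Remark are themselves only quoted by the paper from \cite{Itu01}, so your proof inherits those dependencies --- which is consistent with how the paper uses them elsewhere. The aside $S_1a\wedge CS_2a=0$ in the forward direction of the adjunction is unnecessary, since the extra meet factor can simply be discarded.
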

%end theorem 3.2

Conversely:
%theorem 3.3
\begin{theorem}
Let $A = (A, 0, 1,\wedge,\vee,\Rightarrow,\neg,S_1,S_2)$ be a 
$HT$-algebra and let us introduce a new operation $C$ by means 
of the following equation, for all $a \in A:$
\[
Ca = \neg S_1 a
\]
Then the abstract algebra $(A, 0, 1,\wedge,\vee, C, S_1,S_2)$ is a $T$-structure.
\end{theorem}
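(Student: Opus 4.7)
My plan is to verify the seven axioms (T1)--(T7) of a $T$-structure in sequence, drawing only on the $HT$-algebra axioms (HT1)--(HT7). The central workhorse will be the observation, extracted from (HT6), that $S_1 x \leq x$ for every $x \in A$, together with the monotonicity of $S_1$ and $S_2$, which follows from their $\vee$-preservation in (HT2).

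For the routine part: (T1) is immediate from (HT1), since every Heyting algebra is a bounded distributive lattice; (T2) is exactly (HT2); and (T4) is (HT5). For (T5), the inequality $S_1 0 \leq 0$ extracted from (HT6) yields $S_1 0 = 0$, while $S_1 1 = S_1(1 \Rightarrow 1) = 1 \wedge 1 = 1$ via (HT4) applied to the tautology $a \Rightarrow a = 1$. For (T7), I would start from $S_1 a \leq a$ and apply the monotone operator $S_2$, then simplify with (HT5): $S_1 a = S_2 S_1 a \leq S_2 a$.

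For (T3), with the definition $Ca = \neg S_1 a$ substituted in, the first identity $S_1 a \wedge Ca = 0$ reduces to the standard Heyting law $x \wedge (x \Rightarrow 0) = 0$ specialised to $x = S_1 a$, and the second identity $S_1 a \vee Ca = 1$ is precisely (HT7).

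The one substantive step---and the place where the equational packaging of (HT4) really earns its keep---is the Determination Principle (T6). Assuming $S_i a = S_i b$ for $i = 1, 2$, I expect (HT4) to collapse immediately to
\[
S_1(a \Rightarrow b) = (S_1 a \Rightarrow S_1 b) \wedge (S_2 a \Rightarrow S_2 b) = 1 \wedge 1 = 1.
\]
Invoking once more the bound $S_1 x \leq x$ from (HT6), this forces $a \Rightarrow b = 1$, i.e.\ $a \leq b$ in the underlying Heyting algebra. A symmetric argument on $b \Rightarrow a$ yields $b \leq a$, hence $a = b$. The main obstacle I anticipate is recognising that (HT4) is designed precisely so that this argument bypasses any three-valued case analysis; once that is spotted, the rest is bookkeeping.
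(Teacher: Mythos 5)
Your verification is correct and complete: each of (T1)--(T7) is established from (HT1)--(HT7) by a sound argument, and the key step --- deriving the Determination Principle (T6) from (HT4) together with $S_1 x \leq x$ via $a \Rightarrow b = 1$ --- is exactly the right use of the equational axioms. Note that the paper itself gives no proof of this theorem, deferring instead to \cite{Itu01}, so there is no in-paper argument to compare against; your write-up supplies a self-contained verification consistent with what that reference establishes.
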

%end theorem 3.3

%remark 3.4
\begin{remark}
Every $HT$-algebra satisfies the Ivo Thomas axiom \cite{IvoT},
for all $a, b, c \in A$:
\[
((a \Rightarrow c) \Rightarrow b) \Rightarrow (((b \Rightarrow a) 
\Rightarrow b) \Rightarrow b) = 1. 
\]
This equality implies that every $HT$-algebra $A$ is a three-valued Heyting algebra \cite{Mont-L64}.
\end{remark}
%\end{remark}

%\section{A formalized propositional language}
\section{\hspace{-2.6ex}.{\hspace{1.ex}}A formalized propositional language}

The logic considered in the following sections is intended to provide 
a framework to manage a poset of two intelligent agents.

A formal system needs a language. In the applications, this language will 
be used as a tool to represent knowledge.
For notational convenience, we use the same symbols for connectives in the language and operations in algebraic structures.

\

The language of $HT$-logics is a propositional language whose formulas are built from propositional variables taken from a countable set $\textbf{VarProp}$ with signs of conjunction $(\wedge)$, disjunction $(\vee)$, implication $(\Rightarrow)$, negation $(\neg)$, and the family $\{ S_{t_1}, S_{t_2} \}$ of unary connectives. Implication $(\Rightarrow)$ and negation $(\neg)$ are intuitionistic connectives, and operators $S_{t_1}$ and $S_{t_2}$ are Boolean operators.

\

The set $\textbf{For}$ of formulas of the logic is the least set 
satisfying the conditions:
\begin{itemize}
\item[-] $\textbf{VarProp}\ \subseteq \textbf{For}$,
\item[-] if $\alpha, \beta \in \textbf{For}$, then $\alpha \wedge \beta, \alpha \vee \beta, \alpha \Rightarrow \beta \in \textbf{For}$,
\item[-] if $\alpha \in \textbf{For}$ then $\neg \alpha, S_{t_1} \alpha, S_{t_2} \alpha \in \textbf{For}$
\end{itemize}

\noindent{\textbf{Semantics of the language}}

In order to formally reason about knowledge, we need a suitable semantic model.
We define a meaning of formulas of the given language by means of notions of model and satisfiability of formulas in a model, in a standard way.

\

\noindent{\textbf{a- Algebraic models}}

Let $\textbf{For}$ be the set of formulas and $A$ a $HT$-algebra. In the set of formulas, the connectives $(\wedge, \vee, \rightarrow, \neg, S_{t_1}, S_{t_2})$ are regarded as algebraic operations. 

A map $h : \textbf{For} \rightarrow A$ is called a \textbf{homomorphism} provided it preserves all the operations on $\textbf{For}$. 

\smallskip

%definition 4.1
\begin{definition}
An \textbf{algebraic model} for the set of formulas \textbf{For}, is a system $(A, h)$ such that $A$ is a $HT$-algebra and $h: \textbf{For} \rightarrow A$ is a homomorphism.
\end{definition}

A formula $\alpha$ is \textbf{algebraically true in the algebraic model} 
$(A, h)$ iff $h(\alpha) = 1$, and $\alpha$ is \textbf{algebraically valid} (denoted $\models_{Alg} \alpha$) iff $\alpha$ is algebraically true in every algebraic model. 

A formula $\alpha$ is an \textbf{algebraic consequence} of a set of formulas $\Gamma$ in the algebraic model $(A,h)$ (denoted by $\Gamma \models_{A} \alpha)$ iff whenever all the formulas from $\Gamma$ are algebraic true in $(A, h)$, we have $\alpha$ is algebraic true in $(A, h)$;
and $\alpha$ is an \textbf{algebraic consequence} of a set of formulas $\Gamma$ (denoted by $\Gamma \models_{Alg} \alpha)$ iff for every algebraic model $(A, h)$, we have $\Gamma \models_{A} \alpha$.
%$h(\gamma) = 1$ for all $\gamma \in \Gamma$, we have $h(\alpha) = 1$, 

\

\noindent{\textbf {b- Relational models}} 

Motivated by some results in [\cite{IO-96}, p.135], we introduce the following notion.

%definition 4.2
\begin{definition}\label{def:HT-frame}
A \textbf{$\boldsymbol{HT}$-frame} is a system 
\[
K = (W, R, s_1, s_2)
\]

\noindent where, for all $w \in W$

\begin{itemize}
\item[(K0)] $W$ is a nonempty set (of states $w$), $R$ is a binary relation on $W$ and $s_1, s_2$ are functions on $W$, 
\item[(K1)] $R$ is a preorder, that is $R$ is reflexive and transitive,
\item[(K2)] $s_j(s_i(w)) = s_j(w)$, for all $i,j = 1, 2$,
\item[(K3)] $R(s_1(w), w)$,
\item[(K4)] $R(w, s_2(w))$,
\item[(K5)] $R(w, w')$ implies  $R(s_i(w), s_i(w'))$ and $R(s_i(w'), s_i(w))$, for $i = 1, 2$,
\item[(K6)] If $w \in W$ then there are $i \in \{1, 2\}$ and $w' \in W$ such that $w = s_i(w')$.
\end{itemize}
\end{definition}
%end definition 4.2

\smallskip

%definition 4.3
\begin{definition}\label{def:HT-model}
A \textbf{$\boldsymbol{HT}$-model} based on a HT-frame $K$ is a system
$M = (K, m)$ such that $m : \textbf{VarProp} \rightarrow {\cal P}(W)$ is a meaning function that assigns subsets of states to propositional variables, and satisfies the atomic heredity condition:
\[
\text {(her \ at)} \qquad R(w, w') \ \text{and} \  w \in m(p) \  \text{imply} \  w' \in m(p).
\]
\end{definition}
%end definition 4.3

We say that in a $HT$-model $M$ \textbf{a state $\boldsymbol w$ satisfies a formula $\boldsymbol \alpha$} (denoted $M, w \ sat \ \alpha$) whenever the following conditions are satisfied:

\begin{tabular}{lll}
$M, w$ \ sat \ $p$ & iff & $w \in m(p),$ \quad for $p \in \textbf{VarProp}$,  \\
$M, w \ sat \ \alpha \wedge \beta$ & iff & $M, w \ sat \ \alpha$ \ and \ $M, w \ sat \ \beta$,\\
$M, w \ sat \ \alpha \vee \beta$ & iff & $M, w \ sat \ \alpha$ \ or \ $M, w \ sat \ \beta$,\\
$M, w \ sat \ \alpha \Rightarrow \beta$ & iff & for all $w'$, if $R(w, w')$ and $M, w' \ sat \ \alpha$ then, $M, w' \ sat \ \beta$,\\
$M, w \ sat \ \neg \alpha$ & iff & for all $w'$, if $R(w, w')$ then, $not \ M, w' \ sat \ \alpha$,\\
$M, w \ sat \ S_i \alpha$ & iff & $M, s_i(w) \ sat \ \alpha$.
\end{tabular}

\

Given a $HT$-model $M$, we extend the meaning function $m$ to all formulas:
\[
m(\alpha) = \{w \in W : M, w \ sat \ \alpha \}
\]

\medskip

A formula $\alpha$ is \textbf{true in a $\boldsymbol{HT}$-model} $M = (K, m)$ (denoted $\boldsymbol{M \ sat \ \alpha}$) iff $M, w \ sat \ \alpha$, for every $w \in W$ (i.e.\ $m(\alpha) = W$), $\alpha$ is \textbf{true in a $\boldsymbol{HT}$-frame} $K$ iff it is true in every $HT$-model based on $K$, and $\alpha$ is \textbf{$\boldsymbol{HT}$-valid} (denoted $\models_{Rel} \alpha$) iff it is true in every $HT$-frame.

A formula $\alpha$ is a \textbf{relational $\boldsymbol{HT}$-consequence} of a set of formulas $\Gamma$ in a $HT$-model $M = (K, m)$ (denoted by $\Gamma \models_{M} \alpha$) iff whenever all the formulas from $\Gamma$ are true in $M$, we have $\alpha$ is true in $M$; 
and $\alpha$ is a \textbf{relational $HT$-consequence} of a set of formulas $\Gamma$ (denoted by $\Gamma \models_{Rel} \alpha$) iff for every $\boldsymbol{HT}$-model $M$, $\Gamma \models_{M} \alpha$).

\smallskip

%proposition 4.4
\begin{proposition} \label{prop: her} 
For every $HT$-model $M = (K, m)$ and for every 
formula $\alpha$ the following heredity condition holds:
\[
{(her) \qquad if\ } R(w, w') \ \text{and} \ M, w \ sat \ \alpha, 
\ \text{then}\ M, w' \ sat \ \alpha.
\]
\end{proposition}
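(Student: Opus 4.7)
The plan is a straightforward structural induction on the formula $\alpha$. The base case, where $\alpha$ is a propositional variable $p$, is precisely the atomic heredity condition (her at) built into Definition \ref{def:HT-model}, so nothing is needed there. For the conjunction and disjunction clauses, heredity follows immediately from applying the induction hypothesis to each subformula: if both (resp.\ one) of the conjuncts (resp.\ disjuncts) is satisfied at $w$, the same holds at $w'$.

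For the implicational case $\alpha = \beta \Rightarrow \gamma$, the argument relies on the transitivity of $R$ given by (K1). Assuming $R(w,w')$ and $M, w \ sat \ \beta \Rightarrow \gamma$, take any $w''$ with $R(w', w'')$ and $M, w'' \ sat \ \beta$; by transitivity $R(w, w'')$, and the satisfaction clause for implication at $w$ yields $M, w'' \ sat \ \gamma$, whence $M, w' \ sat \ \beta \Rightarrow \gamma$. The negation case $\alpha = \neg \beta$ is handled analogously via transitivity, since $\neg \beta$ is treated by a universal-style clause over $R$-successors.

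The interesting case, and the one I expect to be the main obstacle, is $\alpha = S_i \beta$. Here the satisfaction clause replaces the current state $w$ by $s_i(w)$, so heredity along $R$ between $w$ and $w'$ is not immediately available from the induction hypothesis; instead, we need heredity between $s_i(w)$ and $s_i(w')$. Suppose $R(w, w')$ and $M, w \ sat \ S_i \beta$, i.e.\ $M, s_i(w) \ sat \ \beta$. Condition (K5) exactly provides $R(s_i(w), s_i(w'))$, and applying the induction hypothesis to $\beta$ then yields $M, s_i(w') \ sat \ \beta$, which is $M, w' \ sat \ S_i \beta$. Thus (K5) is the axiom doing the real work in the proof, while (K1) suffices for the intuitionistic connectives and the atomic clause covers the base.
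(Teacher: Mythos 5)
Your proof is correct and follows essentially the same route as the paper: structural induction on $\alpha$, with the atomic case given by (her at), transitivity (K1) handling $\Rightarrow$ and $\neg$, and (K5) doing the work in the $S_i$ case. The paper's own proof only writes out the $S_i$ case explicitly, and your treatment of it matches it exactly.
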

\begin{proof}
The proof is by induction with respect to complexity of $\alpha$. By the way of example we show (her) for formulas of the form $S_i \alpha$. Let $R(w, w')$ and 
$M, w \ sat \ S_i \alpha$, hence by $(K5)$ we have $R(s_i(w), s_i(w'))$ and by Definition \ref{def:HT-model} we deduce $M, s_i(w) \ sat \ \alpha$. From the inductive hypothesis we obtain $M, s_i(w') \ sat \ \alpha$, i.e.\ $M, w' \ sat \ S_i \alpha$.
\end{proof}
%end proposition 4.4

\smallskip

%proposition 4.5
\begin{proposition} In every $HT$-frame $K = (W, R, s_1, s_2)$, for every 
$w \in W$, there is $i \in \{1, 2\}$ such that $w = s_i(w)$, i.e.\ each $w$ is 
a fixed point of a function $s_i$.
\end{proposition}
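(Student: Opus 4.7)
The plan is to use condition (K6) to represent $w$ as $s_i(w')$ for some $i \in \{1,2\}$ and some $w' \in W$, and then to collapse the double application of $s_i$ using the idempotence-like property (K2).

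More concretely, I would proceed as follows. First, fix $w \in W$ and apply (K6) to obtain an index $i \in \{1,2\}$ and a state $w' \in W$ with $w = s_i(w')$. Second, apply $s_i$ to both sides to get $s_i(w) = s_i(s_i(w'))$. Third, invoke (K2) with $j=i$, which gives $s_i(s_i(w')) = s_i(w')$. Combining these two equalities yields $s_i(w) = s_i(w') = w$, which is exactly the claimed fixed-point property, with the very same index $i$ supplied by (K6).

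I do not foresee any real obstacle here: the statement is a direct two-line consequence of (K6) and (K2), and the other frame axioms (relation $R$, preorder, heredity, compatibility with $R$) play no role. The only subtlety worth flagging in the write-up is that the index $i$ in the conclusion is the same index produced by (K6); no case analysis on $i$ is needed, since (K2) holds uniformly for all $i,j \in \{1,2\}$.
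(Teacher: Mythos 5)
Your proposal is correct and is essentially identical to the paper's own proof: both apply (K6) to write $w = s_i(w')$ and then use (K2) with $j=i$ to obtain $s_i(w) = s_i(s_i(w')) = s_i(w') = w$. Nothing further is needed.
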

\begin{proof}
Let $w \in W$. By $(K6)$ there are $i \in \{1, 2\}$ and $w' \in W$ such that $w = s_i(w')$. Hence $s_i(w) = s_i (s_i(w')) = s_i(w') = w$.
\end{proof}
%end proposition 4.5

%\section{Equivalence of algebraic and relational validity}
\section{\hspace{-2.6ex}.{\hspace{1.ex}}Equivalence of algebraic and relational model validity}

First let us suppose that we have a $HT$-model $M = (W, R, s_1, s_2, m)$. 
We will define an \textbf{algebraic model} $(A, 0, 1, \wedge, \vee, \Rightarrow, \neg, 
S_1, S_2, h)$ such that for any formula $\alpha$:
\[
h(\alpha) = 1 \quad \text{if and only if} \quad  \ M \  sat \ \alpha
\]

A subset $X \subseteq W$ will be called \textbf{$R$-closed} if whenever $w \in X$ and $R(w, w')$, then $w' \in X$.

\

Let $\textbf{RC}$ be the collection, ordered by inclusion ($\subseteq$), of all $R$-closed subsets of $W$:
\[
\textbf{RC} = \{ X \subseteq W : \text{\ X is R-closed\}}.
\]

We can consider on $RC$ the operations of intersection $\cap$ and union $\cup$. The system $(RC, \emptyset, W, \cap, \cup)$ is a distributive lattice with zero and unit.

Also, if $X, Y \in RC$, let us consider the sets:
\begin{align*}
S_i X &= \{w \in W : s_i(w) \in X \}\ =\  s_{i}^{-1}(X),\ {\text \ for\ i = 1, 2,}\\
C X &= {\cal C}_{W} S_1 X,  
\end{align*}
where ${\cal C}_{W}$ is the ordinary set complementation.

If $X, Y$ are $R$-closed then $S_1 X$, $S_2 X$ and $C X$ are $R$-closed. In fact, assume $w' \in S_i X$ and $R(w', w'')$. By $(K5)$ in Definition \ref{def:HT-frame} we obtain $R(s_i(w'), s_i(w''))$. 
Since $X$ is $R$-closed and $s_i(w') \in X$ we deduce $s_i(w'') \in X$, i.e.\ $w'' \in S_i X$. 

To prove ${\cal C}_{W}S_1 X \in RC$ assume $w' \in {\cal C}_{W}S_1 X$ and $R(w', w'')$. 
From $(K3)$ we have $R(s_1(w'), w')$. By transitivity of $R$ we get $R(s_1(w'), w'')$.
By $(K5)$, we have $R(s_1(w''), s_1s_1(w'))$ and by $(K2)$ we get $R(s_1(w''), s_1(w'))$. 
Since $X$ is $R$-closed and $s_1(w') \not \in X$ it follows that $s_1(w'') 
\not \in X$, hence $w'' \not \in S_1 X$, i.e.\ $w'' \in {\cal C}_{W}S_1 X$.

Moreover $S_1 \emptyset = \emptyset$ and $S_1 W = W$. 

\smallskip

%Proposition 5.1
\begin{proposition} The system 
$(RC, \emptyset, W, \cap, \cup, C, S_1, S_2)$ is a $T$-structure.
\end{proposition}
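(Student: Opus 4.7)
The plan is to verify each of the axioms (T1)--(T7) of Definition \ref{def:alg2} for the system $(RC, \emptyset, W, \cap, \cup, C, S_1, S_2)$. Axioms (T1) and (T5) have already been essentially stated in the text preceding the proposition, so I would just note them in passing. Axioms (T2) and (T3) are essentially formal: since $S_i X = s_i^{-1}(X)$, the distribution of $S_i$ over $\cap$ and $\cup$ follows from the standard behaviour of preimages; (T3) is immediate from the definition $CX = \mathcal{C}_W S_1 X$ together with the fact that set-theoretic complement satisfies the Boolean identities. Axiom (T4) is a direct computation using (K2): $S_i S_j X = \{w : s_j(s_i(w)) \in X\} = \{w : s_j(w) \in X\} = S_j X$.

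The real content is in (T7) and (T6), both of which genuinely use the frame axioms and the $R$-closedness condition. For (T7), given $w \in S_1 X$, i.e.\ $s_1(w) \in X$, I would use (K3) to get $R(s_1(w), w)$ and (K4) to get $R(w, s_2(w))$, then transitivity (K1) to chain them into $R(s_1(w), s_2(w))$; since $X$ is $R$-closed, $s_2(w) \in X$, hence $w \in S_2 X$.

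For the Determination Principle (T6), suppose $S_1 X = S_1 Y$ and $S_2 X = S_2 Y$, and let $w \in X$. By (K6) there exist $i \in \{1,2\}$ and $w' \in W$ with $w = s_i(w')$, so $s_i(w') \in X$, i.e.\ $w' \in S_i X = S_i Y$, giving $s_i(w') = w \in Y$. The reverse inclusion is symmetric, so $X = Y$.

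The step I expect to be the main obstacle is (T6): one has to see that axiom (K6) is designed precisely to supply the witness $w'$ needed to move information between $X$ and $Y$ via the hypothesis on $S_1$ and $S_2$. Everything else is routine preimage calculus or a short $R$-chain argument. I would organise the write-up as a short paragraph per axiom, grouping (T1), (T5) with a reference to the discussion preceding the proposition, doing (T2)--(T4) as one-line verifications, and devoting the remaining space to (T7) and (T6).
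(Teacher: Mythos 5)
Your proposal is correct and follows essentially the same route as the paper: routine preimage calculus for (T2)--(T5), axiom (K6) supplying the witness for the Determination Principle (T6), and (K3)/(K4) with $R$-closedness for (T7). The only cosmetic difference is in (T7), where the paper passes through the intermediate fact $S_1X\subseteq X\subseteq S_2X$ (two applications of $R$-closedness) while you chain $R(s_1(w),w)$ and $R(w,s_2(w))$ by transitivity and apply $R$-closedness once; both are equally valid.
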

\begin{proof}
We show that the operations defined above fulfill the properties $(T1)-(T7)$ 
indicated in Definition \ref{def:alg2}. 

In fact, $(T1)$ and $(T5)$ have been indicated above; $(T2)$ follows at once from properties of the inverse image; $(T3)$ is a consequence of definitions; $(T4)$ is a consequence of $(K2)$ and a property of the inverse image.
%$S_iS_j X = s_i^{-1}s_j^{-1}(X) = (s_js_i)^{-1}(X) = s_j^{-1}(X) = S_j X.$

To prove $(T6)$ suppose $S_i X = S_i Y$, for all $i = 1, 2$. Let $w \in X$; by $(K6)$ there is $i \in \{1, 2\}$ and 
$w' \in W$ such that $w = s_i(w') \in X$. It follows that $w' \in S_i X = 
S_i Y$, that is $s_i(w') = w \in Y$ and thus $X \subseteq Y$. The proof of the other half is similar. 

Finally, to prove $(T7)$ let $w' \in S_1 X$, that is $s_1(w') \in X$. By $(K3)$ 
we have $R(s_1(w'), w')$ and since $X$ is $R$-closed we obtain $w' \in X$. Suppose now $w \in X$. 
By $(K4)$ we have $R(w, s_2(w))$ so $s_2(w) \in X$ and $w \in S_2 X$.

The proof of the proposition is now complete.
\end{proof}
%\end Proposition 5.1

\smallskip

%remark 5.2
\begin{remark}
Taking into account the equivalence between Definitions \ref{def:alg2} and \ref{def:alg3} we find in particular the well known result that the system $(RC, \emptyset, W, \cap, \cup, \Rightarrow, \neg)$ is a Heyting 
algebra (\cite{Fit69}, page 24). 

For sets $X, Y \in RC$, the set $X \Rightarrow Y$ is given by the 
equation (1) in Theorem \ref{theo:impl}. That is:
\begin{align*}
X \Rightarrow Y & =  Y \cup \bigcap_{k=1}^{2} ({\cal C}_{W}S_1 S_k X \cup S_k Y)\ =\  Y \cup \bigcap_{k=1}^{2} (s_{k}^{-1}({\cal C}_{W} X) \cup s_{k}^{-1}(Y)\\
& =  Y \cup s_1^{-1} ({\cal C}_{W} X) \cup s_{1}^{-1}(Y)\ =\  Y \cup s_{1}^{-1}({\cal C}_{W} X)
%& \subseteq {\cal C}_{W} X \cup Y
\end{align*}

Thus $Z = S_1 ({\cal C}_{W} X) \cup Y$ is the largest $R$-closed subset such that $X \cap Z \subseteq Y$.
\end{remark}
%end remark 5.2

\smallskip

We define $h: For \rightarrow {\cal P}(W)$ by
\[
h(\alpha) = \{ w \in W : M, w \text {\ sat\ } \alpha\}
\]

Let $w \in h(\alpha)$ and $R(w, w')$. By {\it (her)} (Proposition \ref{prop: her}), 
$M, w' \ sat\ \alpha$, i.e.\ $w' \in h(\alpha)$. Thus $h(\alpha)$ is $R$-closed.

From a result in (\cite{Fit69}, page 24), we know that $h$ is a Heyting 
homomorphism. Moreover we have the equality  $h(S_i \alpha) = S_i h(\alpha).$
This fact is a consequence of the following equivalent conditions:
\[
\begin{array}{lllll}
w \in h(S_i a) & \Longleftrightarrow & M, w {\ sat\ } S_i \alpha 
& \Longleftrightarrow & M, s_i(w) {\ sat\ } \alpha \\
& \Longleftrightarrow & s_i(w) \in h(\alpha) 
& \Longleftrightarrow & w \in S_i h(\alpha)
\end{array}
\]

\noindent Thus $(RC, \emptyset, W, \cap, \cup, \Rightarrow, \neg, S_1, S_2, h)$ is an \textbf{algebraic model}. 

\medskip

Concerning the validity of a formula, we have the desired equivalence:
\[
h(\alpha) = W (\in RC) \quad \text{if and only if}\quad m(\alpha) = W
\]

\

\textbf{Conversely}, suppose we have an algebraic model $(A, h)$. We will define a \textbf{$\boldsymbol{HT}$-model} $M = (W, R, s_1, s_2, m)$ such that for any formula $\alpha$:
\[
M\ sat\ \alpha \quad \text{if and only if}\quad h(\alpha) = 1
\]

\smallskip

Let $W$ be the class of all \textbf{prime filters} in $A$. Let $R$ be the inclusion relation $\subseteq$ and $s_i : W \rightarrow W$ be the maps defined as follows, for $i=1,2$ and $P \in W$: 
\[
s_i(P) = \{x \in A : S_i x \in P \}. 
\]
This set is a prime filter.

If $p \in VarProp$ and $P$ is a prime filter, we define 
\[
M, P \ sat \ p \quad \text{if and only if}\quad h(p) \in P.
\]

\smallskip

%proposition 5.3
\begin{proposition} \label{prop:frame}
If $A$ is a $HT$-algebra, the system $K = (W, R, s_1, s_2)$ defined above is a $HT$-frame.
\end{proposition}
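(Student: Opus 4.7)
The approach is to verify axioms (K0)--(K6) in sequence; most reduce to routine algebraic checks, with (K5) and especially (K6) requiring genuine use of primality of $P$ and of the three-valued structure of $A$. For (K0) I first check that $s_i(P)$ is a prime filter whenever $P$ is: since $S_i$ is a $(0,1)$-lattice homomorphism (by (HT2) together with $S_i 0 = 0$ and $S_i 1 = 1$), upward closure, meet-closure, properness, and primality transfer from $P$ to $s_i(P)$ directly through the identities $S_i(x\wedge y)=S_ix\wedge S_iy$ and $S_i(x\vee y)=S_ix\vee S_iy$; nonemptiness of $W$ is the standard prime-filter theorem for distributive lattices. Condition (K1) is immediate. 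For (K2), unfolding gives $s_j(s_i(P)) = \{x : S_i S_j x \in P\}$, and (HT5) simplifies $S_iS_jx=S_jx$. Conditions (K3) and (K4) follow from the inequalities $S_1 a \leq a \leq S_2 a$ (property (T10)): if $S_1 x \in P$ then $x \in P$ by up-closure, whence $s_1(P) \subseteq P$; and $x \in P$ implies $S_2 x \in P$, whence $P \subseteq s_2(P)$.

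The harder half of (K5) is that $P \subseteq P'$ implies $s_i(P') \subseteq s_i(P)$. I would use that $S_i x \in B(A)$ is complemented: $S_i x \vee \neg S_i x = 1 \in P$, and primality of $P$ places one of $S_i x, \neg S_i x$ in $P$; the alternative $\neg S_i x \in P$ would, via $P \subseteq P'$, put both $S_i x$ and $\neg S_i x$ in the proper filter $P'$, a contradiction.

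The main obstacle is (K6): every prime filter $P$ must equal $s_i(P')$ for some $i\in\{1,2\}$ and some $P'\in W$. I plan to take $P' = P$ and show that every $P$ is a fixed point of $s_1$ or of $s_2$. Since $s_1(P) \subseteq P \subseteq s_2(P)$ always holds, it suffices to rule out the case where both inclusions are strict. Assume for contradiction $a \in P \setminus s_1(P)$ and $b \in s_2(P) \setminus P$; then $\neg S_1 a \in P$ (by the primality argument used in (K5)) and $S_2 b \in P$. Applying the implication formula of Theorem \ref{theo:impl}, rewritten using $CS_k a = \neg S_1 S_k a = \neg S_k a$ (by (HT5)), one obtains
\[
a \Rightarrow b \;=\; b \vee \bigwedge_{k=1,2} (\neg S_k a \vee S_k b).
\]
Each factor of the meet is forced into $P$ (the $k=1$ factor dominates $\neg S_1 a \in P$, the $k=2$ factor dominates $S_2 b \in P$), so $a \Rightarrow b \in P$. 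The Heyting-algebra inequality $a \wedge (a \Rightarrow b) \leq b$ combined with up-closure of $P$ then forces $b \in P$, contradicting $b \notin P$. The subtle point, and the reason I expect (K6) to be the hardest step, is recognising that the implication formula channels all the relevant information through the Boolean elements $S_k a, S_k b$, so that primality of $P$ alone suffices to deliver the contradiction.
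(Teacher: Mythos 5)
Your proposal is correct, and for (K0)--(K5) it follows essentially the same path as the paper: (K2) from (HT5), (K3)--(K4) from $S_1a\leq a\leq S_2a$, and for (K5) the same primality-plus-complementation argument showing that $P\subseteq Q$ forces $s_i(Q)\subseteq s_i(P)$ (the paper words it as: if $S_ix\notin P$ then $\neg S_ix\in P\subseteq Q$, so $S_ix\wedge\neg S_ix=0\in Q$, absurd). Where you genuinely diverge is (K6). The paper disposes of it by citing Theorem 5.10 of \cite{Itu01}, which supplies a unique ultrafilter $P'=P\cap B(A)$ of $B(A)$ and an index $i$ with $P=s_i(P)$; you instead give a self-contained argument: since $s_1(P)\subseteq P\subseteq s_2(P)$ always, you rule out both inclusions being simultaneously strict by taking $a\in P\setminus s_1(P)$, $b\in s_2(P)\setminus P$, showing $\neg S_1a\in P$ and $S_2b\in P$, and then using the explicit formula $a\Rightarrow b=b\vee\bigwedge_k(\neg S_ka\vee S_kb)$ together with modus ponens ($a\wedge(a\Rightarrow b)\leq b$) to force $b\in P$, a contradiction. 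The argument is sound; its one implicit ingredient is that the Heyting implication of an $HT$-algebra really is given by formula (1) of Theorem \ref{theo:impl}, which follows from the two equivalence theorems plus the uniqueness of relative pseudocomplements on a fixed lattice, and is worth stating explicitly. What your route buys is independence from the external reference and, as a bonus, it proves directly the stronger fixed-point form $P=s_i(P)$ that the paper only extracts afterwards (from (K6) and (K2)) in a separate proposition; what the paper's route buys is brevity and the additional uniqueness information about the induced ultrafilter of $B(A)$.
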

\begin{proof}
$(K0)$ follows from the definition of $K$. The relation $\subseteq$ satisfies $(K1)$; $(K2)$ is a consequence of $(HT5)$ and a property of the inverse image; $(K3)$ and $(K4)$ are consequence of $(HT6)$, $(T10)$ and a property of prime filters.
 
To prove $(K5)$ suppose $P, Q \in W$ and $P \subseteq Q$. Let $x \in 
s_i(P)$ then $S_i x \in P \subseteq Q$, hence $x \in s_i(Q)$. In addition, let $x \in s_i(Q)$, i.e.\ $S_i x \in Q$. If $S_i x \not \in P$ then 
$\neg S_i x \in P \subseteq Q$ and $S_i x \wedge \neg S_i x = 0 \in Q$, 
which is impossible; hence $S_i x \in P$, i.e.\ $x \in s_i(P)$.  

Finally, to prove $(K6)$, assume $P \in W$. By theorem 5.10 in 
(\cite{Itu01}, p.149) there exists a unique ultrafilter $P' \ (= P \cap B(A))$ in 
$B(A)$ and an integer $i \in \{1, 2\}$ such that $ P = P_i' = \{x \in A : 
S_i x \in P\} = s_i(P)$.
\end{proof}
%end proposition 5.3

\smallskip

%proposition 5.4
\begin{proposition}  \label{prop:model}
For formulas $\alpha, \beta$, and prime filters $P, Q$ we have:
\begin{itemize}
\item[(1)] If $M, P \ sat \ \alpha$ and $P \subseteq Q$ then $M, Q \ sat \ \alpha$
\item[(2)] $M, P \ sat \ (\alpha \wedge \beta)$ iff $M, P \ sat \ \alpha$ and $M, P \ sat \ \beta$
\item[(3)] $M, P \ sat \ (\alpha \vee \beta)$ iff $M, P \ sat \ \alpha$ or $M, P \ sat \ \beta$
\item[(4)] $M, P \ sat \ (\alpha \Rightarrow \beta)$ iff for every $Q \in W$, if $P \subseteq Q$, and $M, P \ sat \ \alpha$ then $M, Q \ sat \ \beta$
\item[(5)] $M, P \ sat \ \neg \alpha$ iff for every $Q \in W$ such that $P \subseteq Q$ then $not\ M, Q \ sat \ \alpha$
\item[(6)] $M, P \ sat \ S_i \alpha$ iff $M, s_i(P) \ sat \ \alpha$
\end{itemize}
\end{proposition}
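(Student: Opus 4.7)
The strategy is to recognise that, once $M$ is established as a genuine $HT$-model, every item of the proposition is either a literal instance of the satisfaction clauses in Definition \ref{def:HT-model} or the heredity statement already proved in Proposition \ref{prop: her}. So the proof plan reduces to verifying that the canonical data really do form an $HT$-model and then unpacking each clause.

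First I would invoke Proposition \ref{prop:frame} to conclude that the frame $K = (W, R, s_1, s_2)$ is an $HT$-frame. Next I would check the atomic heredity condition (her at) for the meaning function implicitly defined by $m(p) = \{P \in W : h(p) \in P\}$: if $h(p) \in P$ and $P \subseteq Q$, then $h(p) \in Q$, so $Q \in m(p)$. Hence $M = (K,m)$ is a bona fide $HT$-model and the inductive clauses for ``sat'' from Definition \ref{def:HT-model} are available.

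With that in hand, (1) is exactly Proposition \ref{prop: her} applied in the canonical frame where $R$ is inclusion; (2), (3) and (6) are direct instances of the $\wedge$-, $\vee$- and $S_i$-clauses of the inductive satisfaction definition; and (4), (5) are the $\Rightarrow$- and $\neg$-clauses with ``for all $w'$ such that $R(w, w')$'' read as ``for all prime filters $Q$ such that $P \subseteq Q$''. The only smaller preliminary I would want to mention is that $s_i(P) \in W$ in clause (6), but this has already been recorded when the map $s_i$ was introduced, since $\{x \in A : S_i x \in P\}$ is a prime filter of $A$. No genuine obstacle arises at this stage: the substantive inductive work---the truth lemma $M, P \ sat \ \alpha$ iff $h(\alpha) \in P$, whose implication case calls for a prime-filter extension lemma in Heyting algebras---lies one step further along in the completeness argument and is not needed for the present proposition.
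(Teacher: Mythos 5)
Your plan reads the proposition as a vacuous unpacking of the satisfaction clauses of Definition \ref{def:HT-model}, with the truth lemma deferred to ``one step further along''. That inverts the actual content. In the canonical construction the satisfaction relation is tied to the algebra by $M, P \ sat\ \alpha$ meaning $h(\alpha) \in P$ for \emph{every} formula $\alpha$, not only for variables: the paper's own proof of item (4) opens with ``suppose not $M, P \ sat\ (\alpha \Rightarrow \beta)$, i.e.\ $h(\alpha \Rightarrow \beta) = h(\alpha) \Rightarrow h(\beta) \notin P$''. Proposition \ref{prop:model} \emph{is} the truth lemma: it asserts that membership of $h(\alpha)$ in a prime filter obeys the recursive satisfaction conditions. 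There is no later lemma to appeal to, and the computation immediately following it, $m(\alpha) = W \iff h(\alpha) \in \bigcap_{P \in W} P \iff h(\alpha) = 1$, depends on exactly this. So your proposal leaves the essential step of the algebraic-to-relational direction unproved.

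Concretely, the items are not literal instances of the satisfaction clauses. Item (2) uses that $P$ is a filter and $h$ a homomorphism; item (3) needs $P$ \emph{prime} ($h(\alpha) \vee h(\beta) \in P$ iff one of the two lies in $P$); item (6) needs $h(S_i\alpha) = S_i h(\alpha)$ together with the definition $s_i(P) = \{x : S_i x \in P\}$; and the reverse implication of item (4) is the genuinely nontrivial point: from $h(\alpha) \Rightarrow h(\beta) \notin P$ one forms the filter $F(P, h(\alpha))$ generated by $P$ and $h(\alpha)$, checks that it omits $h(\beta)$ via the adjunction $p \wedge h(\alpha) \le h(\beta)$ iff $p \le h(\alpha) \Rightarrow h(\beta)$, and invokes the prime filter separation theorem for distributive lattices to obtain $Q \supseteq P$ with $h(\alpha) \in Q$ and $h(\beta) \notin Q$ --- precisely the ``prime-filter extension lemma'' you name and then set aside. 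That argument (and its analogue for item (5)) is what must be supplied here; as written, your proposal establishes only a restatement of Definition \ref{def:HT-model}, not the statement the paper needs.
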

\begin{proof}
We show, for example, the reverse implication of $(4)$ and the statement $(6)$.
Suppose $not\  M, P \ sat \ (\alpha \Rightarrow \beta)$, i.e.\ $h(\alpha \Rightarrow \beta) = h(\alpha) \Rightarrow h(\beta) \not \in P$. Since $h(\beta) \subseteq h(\alpha) \Rightarrow h(\beta)$ we deduce that $h(\beta) \not \in P$.
Let $F(P, h(\alpha))$ be the filter generated by $P$ and $h(\alpha)$. This filter is proper because for example $h(\beta) \not \in F(P, h(\alpha))$. In fact, if $h(\beta) \in F(P, h(\alpha))$, then there would be some $p \in P$ such that $p \wedge h(\alpha) \leq h(\beta)$ which is equivalent to $p \leq h(\alpha) \Rightarrow h(\beta) \in P$, a contradiction. Since $A$ is a distributive lattice then there is a prime filter $Q$ such that $F(P, h(\alpha)) \subseteq Q$ and $h(\beta) \not \in Q$. By construction, $P \subseteq Q$ and $h(\alpha) \in Q$. That is $M, Q \ sat \ \alpha$. 
Hence, by hypothesis, $M, Q \ sat \ \beta$, i.e.\ $h(\beta) \in Q$, a contradiction. 

Statement $(6)$ is a consequence of the following equivalent conditions:
\[
\begin{array}{lllll}
M, P \ sat \ S_i \alpha & \Longleftrightarrow & h(S_i \alpha) = S_i h(\alpha) \in P \\
& \Longleftrightarrow & h(\alpha) \in s_i(P) \\ 
& \Longleftrightarrow & M, s_i(P) \ sat \ \alpha 
\end{array}
\]
\end{proof}
%end proposition 5.4

We define $m: For \rightarrow {\cal P}(W)$ such that $m(\alpha) =  \{P \in W : \ M, P \ sat \ \alpha \}$. Thus the obtained system $M = (K, m)$ is a \textbf{$\boldsymbol{HT}$-model}

\medskip.

Concerning the validity of a formula $\alpha$, we have:
\[
\begin{array}{lll}
m(\alpha) =  \{P \in W : \ M, P \ sat \ \alpha \} = W 
& \Longleftrightarrow & h(\alpha) \in P,\ \text{for every}\ P \in W \\
& \Longleftrightarrow & h(\alpha) \in \bigcap_{P \in W} P \\
& \Longleftrightarrow & h(\alpha) = 1
\end{array}
\]

\medskip

Summing up the above results we will provide the expected result, which is useful in applications:
%theorem 5.5
\begin{theorem} 
A formula $\alpha$ is a relational consequence of a set of formulas $\Gamma$ if and only if $\alpha$ is an algebraic consequence of $\Gamma$.
\end{theorem}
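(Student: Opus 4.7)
The plan is to deduce the theorem directly from the two model constructions already carried out in Section~5: the passage from a $HT$-model to an algebraic model via $R$-closed subsets, and the converse passage from an algebraic model to a $HT$-model via prime filters. The observation to exploit is that in each construction one has $h(\beta) = 1$ if and only if $M \ sat \ \beta$ for every formula $\beta$, not merely for a distinguished one. The theorem is then a matter of applying this equivalence uniformly to every formula in $\Gamma \cup \{\alpha\}$.

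For the direction $\Gamma \models_{Rel} \alpha \Rightarrow \Gamma \models_{Alg} \alpha$, I fix an arbitrary algebraic model $(A,h)$ in which every $\gamma \in \Gamma$ is algebraically true, i.e.\ $h(\gamma) = 1$. I then build the associated $HT$-model $M = (W, \subseteq, s_1, s_2, m)$ on prime filters of $A$, as in Propositions~\ref{prop:frame} and \ref{prop:model}. By the equivalence $m(\beta) = W \Leftrightarrow h(\beta) = 1$ displayed just before the theorem, each $\gamma \in \Gamma$ satisfies $M \ sat \ \gamma$. The hypothesis $\Gamma \models_{Rel} \alpha$, specialized to this particular $M$, yields $M \ sat \ \alpha$, and the same equivalence read in the other direction gives $h(\alpha) = 1$, as required.

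For the converse direction, I start from an arbitrary $HT$-model $M = (W, R, s_1, s_2, m)$ validating every $\gamma \in \Gamma$, and form the algebraic model on the $T$-structure of $R$-closed subsets, with homomorphism $h(\beta) = \{w \in W : M, w \ sat \ \beta\} = m(\beta)$. The top element of this $T$-structure is $W$, so $M \ sat \ \gamma$, i.e.\ $m(\gamma) = W$, is literally the statement that $\gamma$ is algebraically true in $(RC, h)$. The hypothesis $\Gamma \models_{Alg} \alpha$, specialized to this algebraic model, therefore gives $h(\alpha) = W$, which is exactly $M \ sat \ \alpha$.

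Nothing substantive remains: the conceptual work has already been done in the two transfer results of Section~5, in particular the verification that $h$ preserves the Heyting operations and the Boolean operators $S_i$, together with the decisive use of $(K6)$ and its algebraic counterpart to secure $(T6)$ and the equivalence of validity. The only point I would double-check is the quantifier bookkeeping: the construction used in each direction produces a single model, but since $(A,h)$ and $M$ respectively range over \emph{all} algebraic and \emph{all} relational models, the two universal quantifications match up and deliver the stated equivalence of consequence relations. This is a formal matter and presents no real obstacle.
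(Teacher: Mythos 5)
Your proposal is correct and follows essentially the same route as the paper: both directions rest on the two transfer constructions of Section~5 (prime filters in one direction, $R$-closed sets in the other) together with the validity equivalences $h(\beta)=1 \Leftrightarrow m(\beta)=W$ established there. The only cosmetic difference is that the paper argues by contradiction while you argue directly, and you make explicit the step that every $\gamma \in \Gamma$ transfers to the constructed model, which the paper leaves implicit.
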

\begin{proof}
The statement can be formally written in the following way: 
\[
\Gamma \models_{Rel} \alpha \quad \text{if and only if }\quad \Gamma 
\models_{Alg} \alpha
\] 

$(\rightarrow)$ Assume $\Gamma \models_{Rel} \alpha$. If $\Gamma \not \models_{Alg} \alpha$, there would be an algebraic model $(A, h)$ such that $h(\gamma) = 1$, for all $\gamma \in \Gamma$ but $h(\alpha) \neq 1$. 
Let $M = (K, m)$ be the $HT$-model for $\Gamma$ defined in Propositions \ref{prop:frame} and \ref{prop:model}.
Since $h(\alpha) \neq 1$ there is a prime filter $P$ in $A$ such that $h(\alpha) \not \in P$, i.e.\ $m(\alpha) = \{P \in W : h(\alpha) \in P \} \neq W$, a contradiction.

$(\leftarrow)$ Conversely, suppose $\Gamma \models_{Alg} \alpha$. If $\Gamma \not \models_{Rel} \alpha$, there would be a $HT$-model $M = (K, m)$ for $\Gamma$ such that $M \ sat \ \gamma$, for all $\gamma \in \Gamma$ but $not \ M \ sat \ \alpha$. 

Let $(RC, \emptyset, W, \cap, \cup, C, S_1, S_2, h)$ be the algebraic model of $R$-closed subsets of $W$, where $h: For \rightarrow RC$ is the homomorphism: $h(\alpha) = \{w \in W: M, w \ sat\ \alpha\}$, for $\alpha \in For$.  
We have $h(\gamma) = W$ for every $\gamma \in \Gamma$ but $h(\alpha) \neq W$, a contradiction.
\end{proof}
%end theorem 5.5

\medskip

In particular if $\Gamma$ is empty we can conclude the following fact:
%theorem 5.6
\begin{theorem}
A formula $\alpha$ is valid in every relational model if and only if $\alpha$ is algebraically valid.
\end{theorem}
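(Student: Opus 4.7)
The plan is to derive the statement directly as a corollary of the preceding theorem by specialising to the empty premise set $\Gamma = \emptyset$. The reasoning should run as follows.

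First, I would unwind the definitions of the two consequence relations in the case $\Gamma = \emptyset$. On the relational side, $\emptyset \models_{Rel} \alpha$ unfolds to: for every $HT$-model $M = (K,m)$, whenever \emph{all} formulas of $\emptyset$ are true in $M$ (a vacuous hypothesis), $\alpha$ is true in $M$; equivalently, $M \ sat\ \alpha$ holds for every $HT$-model $M$, which is precisely $\models_{Rel} \alpha$. Symmetrically on the algebraic side, $\emptyset \models_{Alg} \alpha$ unfolds to: for every algebraic model $(A,h)$, $\alpha$ is algebraically true in $(A,h)$, i.e.\ $h(\alpha) = 1$; this is exactly $\models_{Alg}\alpha$.

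Then the previous theorem, applied with $\Gamma = \emptyset$, immediately yields
\[
\models_{Rel}\alpha \quad \text{if and only if}\quad \models_{Alg}\alpha,
\]
which is the assertion. There is no genuine obstacle here: the whole content of the equivalence has already been established in the proof of the general consequence theorem, where both directions produced, from a putative counterexample on one side, a concrete counter-model on the other (via the $R$-closed subsets construction for $(\rightarrow)$ and the prime filter construction of Propositions \ref{prop:frame} and \ref{prop:model} for $(\leftarrow)$). Specialising $\Gamma$ to be empty simply bypasses the clauses involving the premises and leaves pure validity on both sides.
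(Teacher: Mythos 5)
Your proposal is correct and matches the paper's intent exactly: the paper states this theorem as an immediate corollary of the preceding consequence theorem (``In particular if $\Gamma$ is empty\ldots''), which is precisely your specialisation to $\Gamma = \emptyset$ together with the observation that the empty-premise clauses are vacuous. Nothing further is needed.
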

%end theorem 5.6

%\section{A finite algebraic model}
\section{\hspace{-2.6ex}.{\hspace{1.ex}}A finite algebraic model}

In this section we show that there is an effective method whereby, 
for any given formula $\alpha$, it can be determined in a finite number of steps whether or not $\alpha$ is an algebraic consequence of a finite set of formulas $\Gamma$. Thus, the formalised propositional system introduced in Section 4 is decidable. 

This result is a consequence of the following theorem.
%theorem 6.1
\begin{theorem} A formula $\alpha$ is an algebraic consequence of a finite set of formulas $\Gamma$ if and only if we have $\Gamma \models_{BT} \alpha$, for every algebraic model $(\boldsymbol{BT}, h)$ based on the finite $HT$-algebra $\boldsymbol{BT}$.
\end{theorem}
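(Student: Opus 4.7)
The plan is to split the argument into the routine forward direction and the substantive converse. For $(\Rightarrow)$: since $\Gamma \models_{Alg} \alpha$ is the universal quantification over all algebraic models, specialising to those of the form $(\boldsymbol{BT}, h)$ immediately yields $\Gamma \models_{BT} \alpha$ for every such $h$.

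For $(\Leftarrow)$, I would argue by contraposition. Starting from $\Gamma \not\models_{Alg} \alpha$, pick an algebraic model $(A, h)$ in which every $\gamma \in \Gamma$ evaluates to $1$ while $h(\alpha) \neq 1$. The goal is to manufacture a refuting model based on $\boldsymbol{BT}$. The essential ingredient is a subdirect representation of $A$. By the Remark at the end of Section 3 every $HT$-algebra is a three-valued Heyting algebra, and by the Monteiro--Lacava structure theorem the three-element Heyting algebra is, up to isomorphism, the unique subdirectly irreducible object in that variety. Combined with the fact recorded in the Remark of Section 2 that $S_i(A) = B(A)$, so that the operators $S_i$ are fully determined by their restriction to the Boolean part of $A$, one concludes that $\boldsymbol{BT}$ is (up to isomorphism) the only subdirectly irreducible $HT$-algebra. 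Birkhoff's theorem then supplies an embedding $\iota : A \hookrightarrow \boldsymbol{BT}^I$ preserving every $HT$-operation.

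Given $\iota$, since $\iota(h(\alpha))$ differs from the constant $1$ of $\boldsymbol{BT}^I$, some coordinate projection $\pi_{i_0}$ satisfies $\pi_{i_0}(\iota(h(\alpha))) \neq 1$. I would then set $h'(p) := \pi_{i_0}(\iota(h(p)))$ on each propositional variable $p$ and extend uniquely to a homomorphism $h' : \textbf{For} \to \boldsymbol{BT}$. Because $\pi_{i_0} \circ \iota$ is an $HT$-homomorphism, one obtains $h'(\gamma) = 1$ for every $\gamma \in \Gamma$ while $h'(\alpha) \neq 1$, so $(\boldsymbol{BT}, h')$ witnesses $\Gamma \not\models_{BT} \alpha$, finishing the contrapositive.

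The main obstacle is the subdirect-irreducibility claim: although the Heyting-reduct half comes from Monteiro--Lacava, one must verify that the extra operators $S_1, S_2$ do not create new subdirectly irreducible algebras beyond $\boldsymbol{BT}$; equivalently, that every congruence of the Heyting reduct automatically respects $S_1$ and $S_2$. This should follow from the Determination Principle (T6) together with $S_i(A) = B(A)$, but it must be spelled out carefully. Finiteness of $\Gamma$ is not actually needed for the equivalence itself; its role is to deliver the decidability advertised in the section, since only the finitely many variables occurring in $\Gamma \cup \{\alpha\}$ matter, leaving at most $3^n$ valuations into $\boldsymbol{BT}$ to be checked mechanically.
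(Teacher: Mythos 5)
Your forward direction coincides with the paper's. For the converse the paper argues differently: from $h(\alpha)\neq 1$ it extracts a \emph{minimal prime filter} $P$ of $A$ with $h(\alpha)\notin P$, invokes Proposition 6.2 of \cite{Itu01} to get a canonical homomorphism $f:A\rightarrow \boldsymbol{BT}$ through the quotient $A/P$ (isomorphic to a subalgebra of $\boldsymbol{BT}$), and refutes $\Gamma\models_{BT}\alpha$ with $g=f\circ h$. You instead go through Birkhoff's subdirect representation theorem: embed $A$ into a power $\boldsymbol{BT}^{I}$ and project onto a coordinate where $\iota(h(\alpha))\neq 1$. The two routes are morally the same one-point-separation argument (a minimal prime filter is exactly what yields one factor of the subdirect decomposition), but the paper's version localises the work to a single quotient and leans on a specific, already-published representation result, whereas yours buys generality at the cost of needing the full classification of subdirectly irreducible $HT$-algebras. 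Your closing observation that finiteness of $\Gamma$ is irrelevant to the equivalence and only matters for the decidability count is correct and consistent with the paper's proof, which never uses it.

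Two caveats. First, the load-bearing step of your argument --- that the subdirectly irreducible $HT$-algebras reduce to $\boldsymbol{BT}$, equivalently that every congruence of the Heyting reduct respects $S_{1},S_{2}$ --- is asserted and flagged as an ``obstacle'' rather than proved. As written this is a genuine gap: everything else in your converse is routine, and this lemma is precisely where the content lives (it is essentially equivalent to the representation theorem the paper cites). It is true and provable from (T6) and $S_{i}(A)=B(A)$, but until it is spelled out your proof is incomplete in exactly the place where the paper supplies a reference. Second, a minor inaccuracy: $\boldsymbol{BT}$ is not the \emph{unique} subdirectly irreducible $HT$-algebra; the two-element subalgebra $\boldsymbol{B}$ is simple, hence also subdirectly irreducible. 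This does not damage the argument, since $\boldsymbol{B}$ embeds into $\boldsymbol{BT}$ and one still obtains an embedding $A\hookrightarrow\boldsymbol{BT}^{I}$, but the claim should be stated as ``every subdirectly irreducible $HT$-algebra embeds into $\boldsymbol{BT}$.''
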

\begin{proof}
The statement can be formally written in the following way, for a finite set $\Gamma$ of formulas: 
\[
\Gamma \models_{Alg} \alpha \quad \text{if and only if} \quad \Gamma 
\models_{BT} \alpha
\]

$(\rightarrow)$ Assume $\Gamma \models_{Alg} a$. Thus, for every algebraic model $(A, h)$, if $h(\gamma) = 1$ for all $\gamma \in \Gamma$, then $h(a) = 1$. In particular in the case $A = \textbf{BT}$.

$(\leftarrow)$ Conversely, suppose $\Gamma \models_{BT} \alpha$ and let $(A, h)$ be any algebraic model such that $h(\gamma) = 1$, for all $\gamma \in \Gamma$.

If $\alpha$ is not true in $(A,h)$, there would be a minimal prime filter $P$ in $A$ (see \cite{Itu01}) such that $h(\gamma) \in P$, for all $\gamma \in \Gamma$ but $h(\alpha) \not \in P$. 
Let $f : A \rightarrow BT$ be the canonical homomorphism defined -via the quotient algebra $A/P$, isomorphic to a subalgebra de $BT$- as in the proof of Proposition 6.2 in (\cite{Itu01}, p.152).

The composition $g = f$\ o\ $h : For \rightarrow BT$ is a homomorphism which satisfies $g(\gamma) = 1$, for all $\gamma \in \Gamma$ and $g(\alpha) \neq 1$. This means that $\Gamma \not \models_{BT} \alpha$, a contradiction.
\end{proof}
%end theorem 6.1

\

Finally we point out another link between finite models. The system $K^0 = (W^0, R^0, s_1^0, s_2^0)$ related to the finite $HT$- algebra $BT$ can be defined in the following way:
\begin{itemize}
\item[-] $W^0 = T$, 
\item[-] $R^0 = \{(w, w'): w, w' \in W^0 \ \text{and}\ w \leq w' \}$, 
i.e.\ $R^0$ is the order on $T$,
\item[-] if $t \in T$ and $w \in W^0$ then $s_i^0(w) = t_{i}$, for $i = 1, 2$.
\end{itemize}

%proposition 6.2
\begin{proposition}
The system $K^0 = (W^0, R^0, s_1^0, s_2^0)$ satisfies the properties $(K0)-(K6)$, that is $K^0$ is a $HT$-frame.
\end{proposition}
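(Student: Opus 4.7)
The plan is to verify directly that each of the seven conditions (K0)--(K6) of Definition \ref{def:HT-frame} holds for $K^0$. The proof amounts to unpacking the definitions; the key observations that make it almost mechanical are that (i) each $s_i^0$ is a \emph{constant} function with value $t_i$, and (ii) in the two-element chain $T$, $t_1$ is the minimum and $t_2$ the maximum with respect to $R^0$.

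First I would dispose of (K0) and (K1) in a single line: $W^0=T$ is nonempty, $R^0\subseteq W^0\times W^0$, and $s_1^0, s_2^0$ are honest functions from $W^0$ to $W^0$; the relation $R^0$ is the order on a poset, so it is automatically reflexive and transitive. Next, (K2) is immediate from constancy: for any $w$ and any $i,j$, $s_j^0(s_i^0(w))=s_j^0(t_i)=t_j=s_j^0(w)$. For (K3), I need $R^0(t_1,w)$ for every $w\in W^0=\{t_1,t_2\}$, which holds because $t_1$ is the least element; dually (K4) asks for $R^0(w,t_2)$, which holds because $t_2$ is the greatest.

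Condition (K5) is perhaps the only one worth a sentence of comment: if $R^0(w,w')$, then since $s_i^0(w)=t_i=s_i^0(w')$ both required relations $R^0(s_i^0(w),s_i^0(w'))$ and $R^0(s_i^0(w'),s_i^0(w))$ reduce to $R^0(t_i,t_i)$, which holds by reflexivity. Finally, (K6) says every $w\in W^0$ lies in the image of some $s_i^0$; but $W^0=\{t_1,t_2\}$ and, for instance, $t_1=s_1^0(t_1)$ while $t_2=s_2^0(t_1)$, so taking $w'=t_1$ with $i=1$ or $i=2$ as appropriate handles both cases.

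No step looks genuinely difficult: the only mildly subtle point is that (K5) is trivial precisely \emph{because} $s_i^0$ is constant, so one should not be tempted to chase the $R^0$-comparability of $w$ and $w'$ through the functions. The proof therefore consists of seven short bullet-style verifications, each one or two lines long.
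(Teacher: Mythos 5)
Your verification is correct and complete; the paper itself dismisses this proposition with ``The proof is straightforward,'' and your direct check of (K0)--(K6), hinging on the constancy of the $s_i^0$ and the fact that $t_1$ and $t_2$ are the least and greatest elements of $T$, is exactly the intended routine argument. Nothing is missing.
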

\begin{proof}
The proof is straightforward.
\end{proof}
%end proposition 6.2

\medskip

We note that, given the $HT$-frame $K^0$, the collection of $R$-closed sets of $W^0$ is $\{\emptyset, F(t_{2}), F(t_{1})\}$. 

As in Section 5, we can construct the $T$-structure $(RC, \emptyset, W^0, \cap, \cup, C, S_{1}, S_{2})$, which is isomorphic to the basic $T$-structure $BT$.

\

In view of the results above, we conclude the paper with the following statement.

%proposition 6.3
\begin{proposition}
Let $K^0$ be the $HT$-frame defined above, $\Gamma$ a finite set of formulas, and $\alpha$ a formula. It follows that:

\begin{tabular}{lll}
$\Gamma \models_{BT} \alpha$ &iff& for every $HT$-model $M^0 = (K^0, m)$, we have: $\Gamma \models_{M^0} \alpha$.
\end{tabular}
\end{proposition}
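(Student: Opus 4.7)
The plan is to exploit the isomorphism already noted in the paper between the lattice of $R^0$-closed subsets of $W^0$ and the basic $T$-structure $\boldsymbol{BT}$, and then adapt the two conversions from Section~5 to this concrete finite setting. Since both $\models_{BT}$ and $\models_{M^0}$ quantify over structures attached to a single underlying object (the algebra $\boldsymbol{BT}$ on one side, the frame $K^0$ on the other), the equivalence should reduce to showing that algebraic models $(\boldsymbol{BT},h)$ and $HT$-models $(K^0,m)$ are in natural bijective correspondence respecting truth of formulas.

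For the forward direction, assume $\Gamma \models_{BT} \alpha$ and let $M^0=(K^0,m)$ be any $HT$-model based on $K^0$ in which every $\gamma \in \Gamma$ is true. As in Section~5, define $h:\mathbf{For}\to RC$ by $h(\beta)=\{w\in W^0:M^0,w\ sat\ \beta\}$; then $h$ is a homomorphism into the $T$-structure $(RC,\emptyset,W^0,\cap,\cup,C,S_1,S_2)$, which by the remark following the definition of $K^0$ is isomorphic to $\boldsymbol{BT}$. Composing with this isomorphism yields an algebraic model $(\boldsymbol{BT},h')$ in which $h'(\beta)=1$ iff $m(\beta)=W^0$. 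Hence $h'(\gamma)=1$ for all $\gamma\in\Gamma$, so by hypothesis $h'(\alpha)=1$, and therefore $M^0\ sat\ \alpha$.

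For the reverse direction, assume the relational condition holds and let $(\boldsymbol{BT},h)$ be any algebraic model with $h(\gamma)=1$ for every $\gamma\in\Gamma$. Using the inverse isomorphism $\boldsymbol{BT}\to RC$, transport $h$ to a homomorphism $\bar h:\mathbf{For}\to RC$; then set $m(p)=\bar h(p)$ for $p\in\mathbf{VarProp}$. Heredity on variables is automatic because every element of $RC$ is $R^0$-closed by construction, so $(K^0,m)$ is a legitimate $HT$-model. A routine induction on formulas (formally identical to the verification in Section~5 that $h(\beta)=\{w:M,w\ sat\ \beta\}$) shows that $\bar h(\beta)=m(\beta)$ for every formula $\beta$. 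Consequently $m(\gamma)=W^0$ for each $\gamma\in\Gamma$, the hypothesis yields $m(\alpha)=W^0$, and reading back through the isomorphism gives $h(\alpha)=1$.

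The only delicate step is checking that the isomorphism $RC\cong\boldsymbol{BT}$ is not just one of distributive lattices but preserves the operators $S_1,S_2$ and $C$; this reduces to unwinding the definition $s_i^0(w)=t_i$, which forces $S_i^{RC}(X)=W^0$ when $t_i\in X$ and $\emptyset$ otherwise, matching exactly the definition of the operators on $\boldsymbol{BT}$ given in the fundamental example. Once this compatibility is in place, both implications are essentially instantiations of the general equivalence theorem of Section~5, so no additional argument is required.
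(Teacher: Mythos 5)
Your proof is correct and follows exactly the route the paper intends: the paper offers no explicit proof, simply asserting the proposition ``in view of the results above,'' and your argument is precisely the expected instantiation of the Section~5 equivalences at the algebra $\boldsymbol{BT}$ and the frame $K^0$, mediated by the isomorphism $RC \cong \boldsymbol{BT}$ of $T$-structures. The point you flag as delicate --- that the isomorphism respects $S_1, S_2, C$ because $s_i^0(w)=t_i$ reproduces the operators of the fundamental example --- is indeed the only thing that needed checking beyond what Section~5 already provides.
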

%end proposition 6.3

\

\noindent \textbf{Acknowledgement}

\noindent The author would like to thank the anonymous expert who read the paper and has provided useful advice and corrections.

\centerline{--------------------}

\begin{thebibliography}{99}
\bibitem{BFGR} Boicescu, V., Filipoiu, A., Georgescu G., Rudeanu S.,
\textit{{\L}ukasiewicz-Moisil Algebras}, Annals of Discrete Mathematics \textbf{49}, North-Holland, 1991, 583 pages.

\bibitem{Fit69} Fitting, M.C., \textit{Intuitionistic Logic Model Theory and Forcing}, North-Holland, Amsterdam, 1969, 191 pages. 

\bibitem{Itu82} Iturrioz, L., \textit{Modal Operators on Symmetrical Heyting algebras}, Universal Algebra and Applications, Banach Center Publications \textbf{9}, Traczyk T. (ed.), PWN-Polish Scientific Publishers, 1982, 289--303.

\bibitem{Itu01} Iturrioz, L., \textit{Algebraic Structures Based on a Chain of Two 
Agents}, in Multiple-Valued Logic - An International Journal, `Grigore C. Moisil memorial issue', Rudeanu S., Iorgulescu A., Georgescu G., Ionita C. (eds.), \textbf{6} no 1-2 (2001), 137--155.

\bibitem{IO-96} Iturrioz L., Or{\l}owska E., \textit{A Kripke-style and relational semantics for logics based on {\L}ukasiewicz algebras}, in {\L}ukasiewicz in Dublin, an International Conference on the Work of Jan {\L}ukasiewicz, Dublin, Ireland, 7--10 July 1996. The paper is published in: J. of Multi-Valued Logic \& Soft Computing \textbf{12} (2006), 131--147.

\bibitem{Itu07} Iturrioz, L., \textit{Two representation theorems of three-valued structures by means of binary relations},  arXiv:0710.1007v1 [cs.DM] 4 Oct 2007, 11 pages.
 
\bibitem{Moi40} Moisil, Gr.C., \textit{Recherches sur les logiques non chrysippiennes}, Annals Sci. Univ. Jassy \textbf{26} (1940), 431--466.

\bibitem{Moi72} Moisil, Gr.C., \textit{Essais sur les logiques non chrysippiennes}, Editions de l'Acad. Rep. Soc. de Roumanie, Bucarest, 1972, 820 pages.

\bibitem{Mont-L64} Monteiro, L., \textit{Alg\`ebre du calcul propositionnel trivalent de Heyting}, Fundamenta Mathematicae \textbf{74} (1972), 99--109.

\bibitem{Ras74} Rasiowa, H., \textit{An Algebraic Approach to Non-Classical Logics}, Studies in Logic and the Foundations of Mathematics \textbf{78}, North-Holland, Amsterdam, 1974, 403 pages.

\bibitem{IvoT} Thomas, I., \textit{Finite limitations on Dummett's \textbf{LC}}, Notre Dame J. of Formal Logic \textbf{3} (1962), 170--174.
\end{thebibliography}
\end{document}